\newif\ifsubmission
\newcommand{\lang}{\mathcal{L}}
\newcommand{\dom}{\{0,1\}^*}
\newcommand{\promiseBQP}{\hyperref[def:promiseBQP]{\mathsf{PromiseBQP}}}
\newcommand{\promiseQMA}{\hyperref[def:promiseQMA]{\mathsf{PromiseQMA}}}
\newcommand\pr[2][]{\Pr_{#1}\left[#2\right]}
\newcommand\given[1][]{\:#1\middle\rvert\:}
\begin{document}
\title{Quantum Pseudorandomness Cannot Be Shrunk\\ In a Black-Box Way}

\ifsubmission
\author{}
\institute{}
\else
\author{Samuel Bouaziz{-}{-}Ermann\inst{1} \and
  Garazi Muguruza\inst{2,3}}
\institute{Sorbonne Universit\'e, CNRS, LIP6, France \and
  Informatics Institute, University of Amsterdam, Netherlands \and
  QuSoft, Netherlands}
    \authorrunning{S. Bouaziz{-}{-}Ermann \and G. Muguruza}
\fi
\maketitle
\begin{abstract}
Pseudorandom Quantum States (PRS) were introduced by Ji, Liu and Song as quantum analogous to Pseudorandom Generators.
They are an ensemble of states efficiently computable but computationally indistinguishable from Haar random states.
Subsequent works have shown that some cryptographic primitives can be constructed from PRSs.
Moreover, recent classical and quantum oracle separations of PRS from One-Way Functions strengthen the interest in a purely quantum alternative building block for quantum cryptography, potentially weaker than OWFs.

However, our lack of knowledge of extending or shrinking the number of qubits of the PRS output still makes it difficult to reproduce some of the classical proof techniques and results.
Short-PRSs, that is PRSs with logarithmic size output, have been introduced in the literature along with cryptographic applications, but we still do not know how they relate to PRSs.
Here we answer half of the question, by showing that it is not possible to shrink the output of a PRS from polynomial to logarithmic qubit length while still preserving the pseudorandomness property, in a relativized way.
More precisely, we show that relative to Kretschmer's quantum oracle (TQC 2021) short-PRSs cannot exist (while PRSs exist, as shown by Kretschmer's work).
\end{abstract}
\section{Introduction}

Pseudorandomness is an important concept in cryptography since almost all relevant classical cryptographic primitives require the existence of one-way functions (OWF) or equivalent objects such as pseudorandom generators (PRG) and pseudorandom functions (PRF)~\cite{classicalPRNG}.
It corresponds to a deterministic function whose output cannot be distinguished from the uniform distribution by a computationally bounded algorithm. 

In 2018, Ji, Liu and Song~\cite{C:JiLiuSon18} introduced a quantum analog of PRGs, called \emph{Pseudorandom Quantum States} (PRS) that consists of a family of polynomial size keyed-states $\{\ket{\phi_k}\}_{k\in\mathcal{K}}$ such that no quantum polynomial-time algorithm can distinguish between a polynomial number of copies\footnote{The reason we give the adversary a polynomial number of copies of the state is that an arbitrary quantum state cannot be cloned, by the no-cloning theorem.} of a randomly sampled element from the PRSs family or a polynomial number of copies of a Haar-random state (see~\Cref{def:prs} for a formal definition). 
We already know how to construct several cryptographic primitives from (variants of) PRSs: public key encryption with quantum keys~\cite{TCC:BGHMSVW23}, quantum digital signatures~\cite{C:MorYam22}, pseudo one-time pad encryption schemes~\cite{C:AnaQiaYue22}, statistically binding and computationally hiding commitments~\cite{C:AnaQiaYue22,MY22b,KhuTom23} and quantum computational zero knowledge proofs~\cite{ITCS:BraCanQia23}. %
Such rapid interest derives probably from the fact that PRSs can be constructed from OWFs~\cite{C:JiLiuSon18} (and thus PRGs), but there are oracle separations found between OWFs and PRSs~\cite{Kre21,STOC23}, which makes them a potentially weaker building block for quantum cryptography, with a purely quantum description.

Classically, practical applications of PRGs require stretching the output, which can be done arbitrarily by just using the output of the PRG as an input and roughly composing the PRG with itself. However, no analogous intuitive construction is possible with PRSs because the input is a classical bit string and the output is a quantum state.
Moreover, although classically shrinking a PRG does not make sense, it is still possible by discarding part of the output of the original PRG.
Quantumly, discarding half of a quantum state could lead to a maximally mixed state, making it easy to distinguish from a Haar-random state, which means that pseudorandomness is not a property respected by subsets of the registers.

This motivates the definition of short-PRS, a PRS that on input $k\in\{0,1\}^\lambda$ outputs a qubit of size $n(\lambda)=O(\log\lambda)$.
Brakerski and Shmueli \cite{C:BraShm20} proved that $c\cdot\log\lambda$-output PRSs exist for any $c\geq1$.
While classically, ``short-PRGs'' and PRGs are equivalent, in the quantum setting, the following question remains open\\
\vspace{0.1cm}\\
\makebox[\textwidth]{Question 1: \emph{What is the relation between short and long PRSs?}}\\
\vspace{0.1cm}

In this work we answer half of the question, by showing that there exists a quantum oracle relative to which PRSs exist but short-PRSs do not, thus PRSs are a weaker assumption than short-PRSs.
Note that this question appeared in the literature~\cite{C:BraShm20,TCC:BitBra21,C:AnaQiaYue22}.

Our separation relies on another fundamental difference between quantum and classical cryptography; while pseudorandomness has received much attention in classical cryptography, the same cannot be said about \emph{pseudodeterminism}.
A pseudodeterministic variant of PRGs (PD-PRG) was first introduced by Ananth, Lin and Yuen~\cite{PRNG}, defined as a quantum polynomial-time algorithm that outputs a pseudorandom string on a \emph{fraction} of the input keys.
In their work, this fraction is polynomial in size and this define polynomial pseudodeterminism, and we define overwhelming pseudodeterminsm when this fraction is overwhelming in size.
They showed that short-PRSs can be used to construct PD-PRGs. %
Later Barhoush, Behera, Ozer, Salvail and Sattath~\cite{BS23} introduced the analogous pseudodeterministic one-way functions (PD-OWF) and showed that PD-PRGs imply PD-OWFs. 

\begin{figure}
    \centering
    \tikzset{every picture/.style={line width=0.75pt}} 

\begin{tikzpicture}[x=0.75pt,y=0.75pt,yscale=-1,xscale=1]

\draw (62.6,122.33) node    {$OWF$};
\draw (63.27,65.67) node    {$PD-OWF^{negl}$};
\draw (203.27,122) node    {$PRG$};
\draw (63.27,21.67) node    {$PD-OWF^{poly}$};
\draw (204.27,19.67) node    {$PD-PRG^{poly}$};
\draw (334.4,67.07) node    {$short-PRS$};
\draw (203.93,66) node    {$PD-PRG^{negl}$};
\draw (316.67,13.07) node [anchor=north west][inner sep=0.75pt]    {$PRS$};
\draw [color={rgb, 255:red, 0; green, 0; blue, 0 }  ,draw opacity=1 ]   (87.6,122.27) -- (180.27,122.05) ;
\draw [shift={(182.27,122.05)}, rotate = 179.86] [color={rgb, 255:red, 0; green, 0; blue, 0 }  ,draw opacity=1 ][line width=0.75]    (10.93,-3.29) .. controls (6.95,-1.4) and (3.31,-0.3) .. (0,0) .. controls (3.31,0.3) and (6.95,1.4) .. (10.93,3.29)   ;
\draw [shift={(85.6,122.28)}, rotate = 359.86] [color={rgb, 255:red, 0; green, 0; blue, 0 }  ,draw opacity=1 ][line width=0.75]    (10.93,-3.29) .. controls (6.95,-1.4) and (3.31,-0.3) .. (0,0) .. controls (3.31,0.3) and (6.95,1.4) .. (10.93,3.29)   ;
\draw [color={rgb, 255:red, 0; green, 0; blue, 0 }  ,draw opacity=1 ]   (62.74,110.33) -- (63.09,80.67) ;
\draw [shift={(63.11,78.67)}, rotate = 90.67] [color={rgb, 255:red, 0; green, 0; blue, 0 }  ,draw opacity=1 ][line width=0.75]    (10.93,-3.29) .. controls (6.95,-1.4) and (3.31,-0.3) .. (0,0) .. controls (3.31,0.3) and (6.95,1.4) .. (10.93,3.29)   ;
\draw [color={rgb, 255:red, 0; green, 0; blue, 0 }  ,draw opacity=1 ]   (119.77,60.8) -- (151.43,60.88) ;
\draw [shift={(117.77,60.8)}, rotate = 0.14] [color={rgb, 255:red, 0; green, 0; blue, 0 }  ,draw opacity=1 ][line width=0.75]    (10.93,-3.29) .. controls (6.95,-1.4) and (3.31,-0.3) .. (0,0) .. controls (3.31,0.3) and (6.95,1.4) .. (10.93,3.29)   ;
\draw [color={rgb, 255:red, 0; green, 0; blue, 0 }  ,draw opacity=1 ]   (204.03,53) -- (204.16,34.67) ;
\draw [shift={(204.17,32.67)}, rotate = 90.41] [color={rgb, 255:red, 0; green, 0; blue, 0 }  ,draw opacity=1 ][line width=0.75]    (10.93,-3.29) .. controls (6.95,-1.4) and (3.31,-0.3) .. (0,0) .. controls (3.31,0.3) and (6.95,1.4) .. (10.93,3.29)   ;
\draw [color={rgb, 255:red, 0; green, 0; blue, 0 }  ,draw opacity=1 ]   (63.27,52.67) -- (63.27,36.67) ;
\draw [shift={(63.27,34.67)}, rotate = 90] [color={rgb, 255:red, 0; green, 0; blue, 0 }  ,draw opacity=1 ][line width=0.75]    (10.93,-3.29) .. controls (6.95,-1.4) and (3.31,-0.3) .. (0,0) .. controls (3.31,0.3) and (6.95,1.4) .. (10.93,3.29)   ;
\draw [color={rgb, 255:red, 0; green, 0; blue, 0 }  ,draw opacity=1 ]   (119.77,15.86) -- (151.77,15.41) ;
\draw [shift={(117.77,15.89)}, rotate = 359.19] [color={rgb, 255:red, 0; green, 0; blue, 0 }  ,draw opacity=1 ][line width=0.75]    (10.93,-3.29) .. controls (6.95,-1.4) and (3.31,-0.3) .. (0,0) .. controls (3.31,0.3) and (6.95,1.4) .. (10.93,3.29)   ;
\draw [color={rgb, 255:red, 0; green, 0; blue, 0 }  ,draw opacity=1 ]   (241.84,33.35) -- (301.45,55.07) ;
\draw [shift={(239.96,32.67)}, rotate = 20.01] [color={rgb, 255:red, 0; green, 0; blue, 0 }  ,draw opacity=1 ][line width=0.75]    (10.93,-3.29) .. controls (6.95,-1.4) and (3.31,-0.3) .. (0,0) .. controls (3.31,0.3) and (6.95,1.4) .. (10.93,3.29)   ;
\draw [color={rgb, 255:red, 0; green, 0; blue, 0 }  ,draw opacity=1 ]   (203.41,110) -- (203.75,81) ;
\draw [shift={(203.78,79)}, rotate = 90.68] [color={rgb, 255:red, 0; green, 0; blue, 0 }  ,draw opacity=1 ][line width=0.75]    (10.93,-3.29) .. controls (6.95,-1.4) and (3.31,-0.3) .. (0,0) .. controls (3.31,0.3) and (6.95,1.4) .. (10.93,3.29)   ;
\draw [color={rgb, 255:red, 0; green, 0; blue, 0 }  ,draw opacity=1 ]   (316.85,79.84) -- (224.27,118.62) ;
\draw [shift={(318.7,79.07)}, rotate = 157.27] [color={rgb, 255:red, 0; green, 0; blue, 0 }  ,draw opacity=1 ][line width=0.75]    (10.93,-3.29) .. controls (6.95,-1.4) and (3.31,-0.3) .. (0,0) .. controls (3.31,0.3) and (6.95,1.4) .. (10.93,3.29)   ;
\draw [color={rgb, 255:red, 0; green, 0; blue, 0 }  ,draw opacity=1 ]   (317.78,32.67) -- (220.23,108.77) ;
\draw [shift={(218.65,110)}, rotate = 322.04] [color={rgb, 255:red, 0; green, 0; blue, 0 }  ,draw opacity=1 ][line width=0.75]    (10.93,-3.29) .. controls (6.95,-1.4) and (3.31,-0.3) .. (0,0) .. controls (3.31,0.3) and (6.95,1.4) .. (10.93,3.29)   ;
\draw [shift={(268.22,71.33)}, rotate = 187.04] [color={rgb, 255:red, 0; green, 0; blue, 0 }  ,draw opacity=1 ][line width=0.75]    (-5.59,0) -- (5.59,0)(0,5.59) -- (0,-5.59)   ;
\draw [color={rgb, 255:red, 245; green, 166; blue, 35 }  ,draw opacity=1 ]   (240.99,53) -- (311.78,28.17) ;
\draw [shift={(313.67,27.51)}, rotate = 160.67] [color={rgb, 255:red, 245; green, 166; blue, 35 }  ,draw opacity=1 ][line width=0.75]    (10.93,-3.29) .. controls (6.95,-1.4) and (3.31,-0.3) .. (0,0) .. controls (3.31,0.3) and (6.95,1.4) .. (10.93,3.29)   ;
\draw [color=blue  ,draw opacity=1 ]   (313.67,20.52) -- (258.77,20.09) ;
\draw [shift={(256.77,20.07)}, rotate = 0.44] [color=blue  ,draw opacity=1 ][line width=0.75]    (10.93,-3.29) .. controls (6.95,-1.4) and (3.31,-0.3) .. (0,0) .. controls (3.31,0.3) and (6.95,1.4) .. (10.93,3.29)   ;
\draw [shift={(285.22,20.29)}, rotate = 225.44] [color=blue  ,draw opacity=1 ][line width=0.75]    (-5.59,0) -- (5.59,0)(0,5.59) -- (0,-5.59)   ;
\draw [color={rgb, 255:red, 245; green, 166; blue, 35 }  ,draw opacity=1 ]   (117.77,25.89) -- (149.77,25.44) ;
\draw [shift={(151.77,25.41)}, rotate = 179.19] [color={rgb, 255:red, 245; green, 166; blue, 35 }  ,draw opacity=1 ][line width=0.75]    (10.93,-3.29) .. controls (6.95,-1.4) and (3.31,-0.3) .. (0,0) .. controls (3.31,0.3) and (6.95,1.4) .. (10.93,3.29)   ;
\draw [color={rgb, 255:red, 245; green, 166; blue, 35 }  ,draw opacity=1 ]   (117.77,70.8) -- (149.43,70.87) ;
\draw [shift={(151.43,70.88)}, rotate = 180.14] [color={rgb, 255:red, 245; green, 166; blue, 35 }  ,draw opacity=1 ][line width=0.75]    (10.93,-3.29) .. controls (6.95,-1.4) and (3.31,-0.3) .. (0,0) .. controls (3.31,0.3) and (6.95,1.4) .. (10.93,3.29)   ;
\draw [color={rgb, 255:red, 245; green, 166; blue, 35 }  ,draw opacity=1 ]   (292.81,79.07) -- (220.82,109.23) ;
\draw [shift={(218.97,110)}, rotate = 337.27] [color={rgb, 255:red, 245; green, 166; blue, 35 }  ,draw opacity=1 ][line width=0.75]    (10.93,-3.29) .. controls (6.95,-1.4) and (3.31,-0.3) .. (0,0) .. controls (3.31,0.3) and (6.95,1.4) .. (10.93,3.29)   ;
\draw [shift={(255.89,94.53)}, rotate = 202.27] [color={rgb, 255:red, 245; green, 166; blue, 35 }  ,draw opacity=1 ][line width=0.75]    (-5.59,0) -- (5.59,0)(0,5.59) -- (0,-5.59)   ;

\end{tikzpicture}
    \caption{Relation between different primitives. Proven, \textcolor{orange}{expected} and \textcolor{blue}{our result}.}\label{fig:graph}
\end{figure}
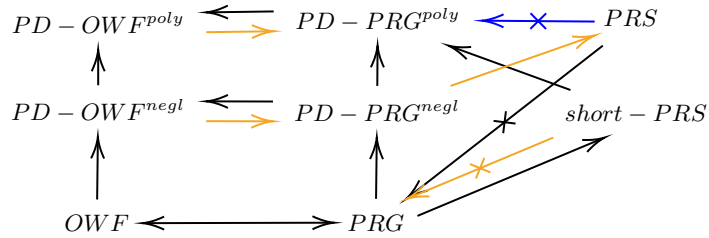

Although a quantum state can encode arbitrary classical information, this information is not necessarily \emph{accessible} for an observer. For example, while OWFs are separated from PRSs, PD-OWFs can be built from short-PRSs~\cite{BS23}. %
Another interesting question is thus\\
\vspace{0.1cm}\\
\makebox[\textwidth]{Question 2: \emph{What is the role of pseudodeterminism in quantum cryptography?%
}}\\
\vspace{0.1cm}

Pseudodeterministic variants of OWFs and PRGs seem to be enough to build many interesting tasks in cryptography. %
    For example,~\cite{PRNG} state that overwhelming PD-PRGs are enough to build PRSs, as these can be used to generate pseudorandom phases pseudodeterministically, and they build polynomial PD-PRGs from short-PRSs. However, it is unclear how to build overwhelming PD-PRGs from short-PRSs, or if polynomial PD-PRGs are enough to build PRSs\footnote{The same construction from PRGs does not work as the polynomial error will induce too much error in the resulting state to be a PRS.}.
In other words, there seems to be a gap between the cryptography we can build from polynomial or overwhelming pseudodeterminism, with short-PRSs being nearly as strong as overwhelming pseudodeterminism while PRSs are weaker than polynomial pseudodeterminism, as we show in this work.
What is clear is that pseudodeterminism is a non-trivial property in quantum cryptography.

\paragraph{Our contribution.} 

We show that Kretschmer's oracle~\cite{Kre21} not only implies that OWFs do not exist, but also none of the pseudoterministic variants do either.
Since PD-PRGs and PD-OWFs can be constructed from short-PRSs~\cite{PRNG,BS23}, our work gives a separation between PRSs and short-PRSs\footnote{Note that relative to Kretschmer's oracle, not only do we have PRSs, but we also have pseudorandom unitaries (PRU).} and can be stated as follows.
\begin{theorem}[\Cref{thm:main}, informal]
  \label{thm:maininformal}
  There exists a quantum oracle \(\mathcal{O}\) relative to which PRSs exist but short-PRSs do not.
\end{theorem}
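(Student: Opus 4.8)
The plan is to route the separation through \emph{pseudodeterministic} primitives and then exploit the complexity-class collapse that Kretschmer's oracle provides. Relative to $\mathcal{O}$ we are handed two facts for free: PRSs (indeed PRUs) exist, and the promise classes collapse, $\mathsf{PromiseBQP}^{\mathcal{O}} = \mathsf{PromiseQMA}^{\mathcal{O}}$. Since the black-box construction of polynomial PD-PRGs from short-PRSs of~\cite{PRNG} relativizes, it suffices to prove the technical heart of the statement: relative to $\mathcal{O}$, \emph{no polynomial PD-PRG exists}. Combined with ``PRSs exist relative to $\mathcal{O}$'', this yields the theorem at once, because a short-PRS relative to $\mathcal{O}$ would produce a PD-PRG relative to $\mathcal{O}$, a contradiction; the identical template rules out PD-OWFs as well, matching the blue edges of~\Cref{fig:graph}.

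To rule out PD-PRGs I would argue by contradiction and manufacture a QPT distinguisher directly out of the collapse. Suppose $G^{\mathcal{O}}$ is a PD-PRG with stretch $m(\lambda) > \lambda$ that, on a noticeable fraction of keys $k$, reproducibly outputs a fixed string $y_k$ with probability at least a threshold $t$. The structural observation is that the set of reproducible outputs $S = \{\, y : \exists k,\ \Pr[G^{\mathcal{O}}(k) = y] \geq t \,\}$ has size at most $\mathrm{poly}(\lambda)\cdot 2^{\lambda}$, hence occupies only a negligible fraction $|S|/2^{m}$ of the output space once the stretch exceeds the relevant logarithmic gap. Thus a genuine output $G^{\mathcal{O}}(k)$ lands in $S$ whenever $k$ is good, while a uniformly random $m$-bit string lands in $S$ only with negligible probability. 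Crucially, membership in $S$ is an $\mathsf{NP}$-like (really $\mathsf{QCMA}$) promise problem: the classical key $k$ is the witness, and the verifier runs $G^{\mathcal{O}}(k)$ polynomially many times to estimate $\Pr[G^{\mathcal{O}}(k)=y]$ and accepts iff the estimate clears $t$. This places the associated language in $\mathsf{PromiseQMA}^{\mathcal{O}}$, so by Kretschmer's collapse it is decided by a $\mathsf{PromiseBQP}^{\mathcal{O}}$ machine, which is exactly an admissible QPT oracle adversary. Taking this machine as the distinguisher $D$ (output $1$ iff $y \in S$) gives advantage at least $(\text{fraction of good keys}) - \mathrm{negl}(\lambda)$, non-negligible for both polynomial and overwhelming pseudodeterminism, contradicting the security of $G$.

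The main obstacle I anticipate is making the completeness--soundness gap of the $\mathsf{QMA}$ verification interact correctly with the merely pseudodeterministic guarantee of $G$ in the \emph{polynomial} regime. One must choose the acceptance threshold so that (i) good keys pass with a margin, (ii) strings that no key produces with non-trivial probability fail with a margin, and (iii) the resulting promise gap survives the amplification needed to invoke the relativized collapse; simultaneously one must verify that the sparsity bound $|S| \leq \mathrm{poly}(\lambda)\cdot 2^{\lambda} \ll 2^{m}$ genuinely follows from the stretch that the short-PRS-to-PD-PRG construction delivers. Pinning down precisely which keys count as ``good'' and how reproducible their outputs must be, and checking that this reproducibility is large enough to sit inside a $\mathsf{PromiseQMA}$ gap, is where the real care lies; once those thresholds are fixed, the complexity-theoretic distinguisher is essentially immediate from $\mathsf{PromiseBQP}^{\mathcal{O}} = \mathsf{PromiseQMA}^{\mathcal{O}}$.
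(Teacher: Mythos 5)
Your proposal is correct in its essentials but takes a genuinely different route from the paper. The paper also reduces everything to Kretschmer's oracle, but it goes through pseudodeterministic \emph{one-way functions} rather than PD-PRGs: it invokes the (relativizing) construction of PD-OWFs from short-PRSs (\Cref{thm:OWF}) and then shows in \Cref{pro:owf} that any PD-OWF yields a promise language in $\promiseQMA\setminus\promiseBQP$, whose yes-instances $(1^{\ell(\lambda)},x',y)$ assert the existence of a pre-image of $y$ extending the prefix $x'$ with success probability $\geq 1-\negl[\lambda]$ and whose no-instances assert that every candidate extension succeeds with probability $\leq 1-1/\lambda$; a $\promiseBQP$ decider for this language is then turned into an \emph{inverter} via an $(\ell(\lambda)+1)$-step adaptive prefix search, contradicting one-wayness. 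You instead build a sparse-image \emph{distinguisher}: the set $S$ of reproducible outputs has density $O(2^{\lambda-m})$, membership in $S$ (with two thresholds, e.g.\ $3/4$ versus $1/2$, to create a constant promise gap surviving the pseudodeterminism) is a $\mathsf{PromiseQCMA}$ problem with the key as witness, and the collapse hands you a QPT test for $S$ that breaks pseudorandomness in a single non-adaptive call. Your route is simpler where it applies --- no adaptive search, no error reduction of the decider to completeness $1-1/\ell(\lambda)$ --- but it genuinely requires the derived primitive to have stretch $m\geq\lambda+\omega(1)$ (you rightly flag that this must be checked against the concrete parameters of the short-PRS-to-PD-PRG construction of~\cite{PRNG}), whereas the paper's inversion argument works even for the length-preserving PD-OWF of \Cref{thm:OWF}. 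One side claim of yours is wrong: the ``identical template'' does \emph{not} rule out PD-OWFs, since an OWF's image need not be sparse and its output need not be indistinguishable from uniform; killing PD-OWFs requires an inversion-style argument like the paper's. This does not affect your proof of the stated theorem, since ruling out PD-PRGs already suffices.
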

This result might sound counterintuitive, as it shows that we cannot shrink a pseudorandom quantum state to a smaller one.
An explanation of this result could be that requiring a polynomial number of copies of a logarithmic quantum state to be indistinguishable from a polynomial number of copies of a random state is a strong assumption, that is similar to OWFs, as shown by previous works.

Relative to Kretschmer's oracle we know that poly-size PRSs exist and $\promiseBQP=\promiseQMA$.
Here we show that not only the existence of OWFs, but also the existence of polynomial error pseudodeterministic OWFs (a possibly weaker assumption, but implied by short-PRSs) also implies $\promiseBQP\not=\promiseQMA$. 
For the proof to work, we rely on the promise version of the complexity classes.
Promises problems are such that there are yes instances and no instances, but also other instances where the output of an algorithm does not matter.
In our proof, we define a language with the yes instances as the values for which there exists a \emph{high} probability pre-image, and the no instances are values for which there is no \emph{low} probability pre-image.
Thus there is a gap between the possible success probabilities, and this gap is needed to distinguish between the yes and no instances in polynomial time.

\paragraph{Open questions.}
We still leave open the question in the opposite direction, of building PRSs from short-PRSs.
There are two partial results regarding this in the literature; the first is that it is possible to construct PRSs with a bigger dimension, but at the cost of the security definition to hold only for single-copy PRSs~\cite{STOC23b}, the second gives a positive answer by assuming the existence of pseudorandom isometries~\cite{isometries}.
However, the general question, without assumptions, still remains open.

Note that our separation is relative to a quantum oracle, thus a \emph{classical} oracle separation of short-PRSs from PRSs is unclear.
The classical oracle from~\cite{STOC23} gives P=NP, but this is not enough for PD-OWFs, as we need promise problems in our proof.

As mentioned above, pseudodeterminism is a non-trivial feature of quantum cryptography. It is unclear where PRSs and short-PRSs lie in the graph of~\Cref{fig:graph}: are short-PRSs equivalent to $\negl$ error pseudodeterminism? Which cryptographic tasks require negligible error pseudodeterminism and which ones inverse polynomial? Moreover, although intuitively the answer should be yes, it is still unclear if all the pseudodeterministic variants of PRGs/PRFs/OWFs are equivalent to each other.
Constructing PD-OWFs from PD-PRGs is possible by the generalization lemma of~\cite{BS23}, but proving the other direction is still open.%

Finally, our results imply that the quantum oracle separation from~\cite{Kre21} is not enough to separate short-PRSs from OWFs, thus it is natural to ask if this is possible.
Indeed quantum oracles for ``short'' PRUs will not lead to $\promiseBQP=\promiseQMA$, because the concentration inequality on the Haar measure does not hold with small dimensions for the unitaries.
As for classical oracles, there is a classical oracle separation between 1-copy short-PRSs and OWFs~\cite{STOC23}, but a multi-copy separation is still unknown (both for short-PRSs and PRSs).
\section{Preliminaries}

We use $\lambda$ to denote the security parameter.
We use $\negl[\cdot]$ to denote a negligible function.
We use \(\epsilon\) to denote the empty string.
We use \(||\) to denote the concatenation operator.
We use \(x\prec y\) to denote the fact that \(x\) is a prefix of \(y\), i.e. there exists \(x'\) such that \(y = x || x'\).
We use \(x\nprec y\) to denote the fact that \(x\) is not a prefix of \(y\).
We use $\leftarrow\mathcal{A}$ to denote uniform sampling from the set $\mathcal{A}$.

We use $\mathcal{H}_n$ to denote the Haar measure over $n$-qubit space, i.e.~$\mathcal{H}((\mathbb{C})^{\otimes n})$. The Haar measure over $\mathbb{C}^d$ is the uniform measure over all $d$-dimensional unit vectors over $\mathbb{C}$.

We include here the relevant pseudorandom notions.
\begin{definition}[Pseudorandom quantum states \cite{C:JiLiuSon18}]
\label{def:prs}
Let $\lambda \in \mathbb{N}$, and let $n(\lambda)$ be the number of qubits in the quantum system.
A keyed family of $n$-qubit quantum states $\{\ket{\phi_k}\}_{k \in \{0,1\}^\lambda}$ is \emph{pseudorandom} if the following two conditions hold:
\begin{enumerate}
\item \textbf{Efficient generation}. There is a QPT algorithm $G$ that on input $k\in\{0,1\}^\lambda$ generates 
\begin{align*}
    G_\lambda(k)=\ketbra{\varphi_k}.
\end{align*}
\item \textbf{Pseudorandomness}. For any QPT adversary $\adv$ and all polynomials $t(\cdot)$, we have
\begin{align*}\left| \pr[k \leftarrow \{0,1\}^\lambda]{\adv\left(1^\lambda, \ket{\varphi_k}^{\otimes t(\lambda)}\right) = 1 } - \pr[\ket{\nu} \leftarrow \mathcal{H}_n(\lambda)]{\adv\left(1^\lambda, \ket{\nu}^{\otimes t(\lambda)}\right) = 1 } \right| \leq \negl[\lambda].\end{align*}
\end{enumerate}
\end{definition}

We say that a~$n(\lambda)$-PRS is a \emph{short-PRS} if the output is logarithmic in the security parameter, i.e.~$n(\lambda)=\Theta(\log\lambda)$. From now on we will use PRSs to refer to long-output PRSs and short-PRSs for logarithmic output.

We also include a pseudodeterministic primitive.
\begin{definition}[Quantum Pseudo-deterministic One-Way Functions\footnote{In~\cite{BS23} they actually define Quantum Pseudo-deterministic One-Way \emph{Hash} Functions (PD-OWHF). We omit the \emph{hash} property here for simplicity, but since the security properties of both functionalities are equivalent our proof also trivially works for PD-OWHF.}~{\cite[Definition 9]{BS23}}]
  \label{def:QPD-OWF}
    A QPT algorithm $F :\{0,1\}^{m(\lambda)}\rightarrow \{0,1\}^{\ell(\lambda)}$ is a \emph{quantum pseudo-deterministic one-way function} if the following conditions hold:
  \begin{itemize}
    \item \textbf{Pseudodeterminism}. There exists a constant $c>0$ and function $\mu(\lambda)=O(\lambda^{-c})$ such that for all $\lambda\in \mathbb{N}$, there exists a set $\mathcal{K}_\lambda \subset \{0,1\}^{m(\lambda)}$:\vspace{1mm}
      \begin{enumerate}
        \item $\pr{x\in\mathcal{K}_\lambda \given{x\leftarrow \{0,1\}^{\ell(\lambda)}}}\geq 1-\mu(\lambda)$.\vspace{1mm}
        \item For any $x\in \mathcal{K}_\lambda$, it holds that
          \begin{align}
            \label{eq:pd}\max_{y\in \{0,1\}^{\ell(\lambda)}}\pr{y=F_\lambda(x)}\geq 1-\negl[\lambda],
          \end{align}
          where the probability is over the randomness of $F_\lambda$.
      \end{enumerate}
    \item \textbf{Security}. For every QPT inverter $\adv$:
      \begin{align}
        \label{eq:sec} \pr[x\leftarrow \{0,1\}^{m(\lambda)}]{F\left(\adv(F(x))\right)=F(x)}\leq \negl[\lambda],
      \end{align}
      where the probability is over the randomness of $F$ and $\adv$.
  \end{itemize}
\end{definition}

Note that the pseudodeterminism factor in the above definition comes from the size of the \emph{good} key space~$\mu(\lambda)$, which is an inverse-polynomial in the security parameter~$\lambda$. This means that for a non-negligible number of elements in the key space, the OWF could behave arbitrarily.
We could also define a negligible variant by requesting $\mu(\lambda)$ to be a negligible function in $\lambda$.

We can build PD-OWFs from short-PRSs.
\begin{theorem}[Adapted from~{\cite[Theorem 6]{BS23}}\footnote{Here we also use the PD-OWF variant of their theorem originally for PD-OWHF. This choice affects the parameters of the domain and range in the theorem statement because constructing a PD-OWHF requires more steps than constructing a PD-OWF (we only need the first step of their proof). However, note that changing the domain/range of the function to some different polynomials in \(\lambda\) would still make the proof go through by changing some parameters in the proof.}]
  \label{thm:OWF}
  Assuming the existence of $(c\log \lambda)$-PRSs with $c>12$, there exists a $O(\lambda^{-c/12+1})$-PD-OWF~$F:\{0,1\}^{\ell(\lambda)}\rightarrow \{0,1\}^{\ell(\lambda)}$ with input/output length~$\ell(\lambda)=\lambda^{c/6}$.
\end{theorem}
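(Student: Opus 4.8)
The plan is to realize $F$ through quantum state tomography of the short-PRS states, turning each key into a classical description of the (polynomial-dimensional) state it generates. Concretely, on input $x$ I would use the efficient generator $G$ to prepare polynomially many copies $\ket{\varphi_x}^{\otimes t}$, run a pure-state tomography routine to obtain a classical estimate $\hat v_x$ of the amplitude vector, and then round $\hat v_x$ onto a fixed discretization grid, outputting the resulting bit string. Because the output register holds only $n(\lambda)=c\log\lambda$ qubits, the Hilbert-space dimension is $d=2^{c\log\lambda}=\lambda^{c}$, so tomography to inverse-polynomial trace distance $\epsilon$ costs only $t=\tilde{O}(d/\epsilon^{2})=\mathrm{poly}(\lambda)$ copies and the whole procedure is QPT. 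The description length is then governed by $d$ and the grid resolution, which is where the range length $\ell(\lambda)=\lambda^{c/6}$ and the constant $12$ in the hypothesis $c>12$ ultimately come from; by the remark in the theorem's footnote only the induced polynomial inequalities, not these precise exponents, are essential.

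For \textbf{pseudodeterminism}, the two sources of randomness are the tomographic noise and the rounding step. Standard tomography guarantees that $\hat v_x$ is $\epsilon$-close to $\ket{\varphi_x}$ with overwhelming probability, so the only way $F(x)$ fails to be single-valued is when $\ket{\varphi_x}$ lies within distance $\epsilon$ of a rounding boundary. To bound the fraction of such keys I would use a randomized-rounding (grid-dithering) argument: for a uniformly shifted grid, any fixed state lands near a boundary with probability proportional to the grid resolution times the number of coordinates, so by averaging there is a shift for which at most a $\mu(\lambda)=O(\lambda^{-c/12+1})$ fraction of keys are boundary keys. Declaring the complement to be the good set $\mathcal{K}_\lambda$ gives condition~(1) of \Cref{def:QPD-OWF}, and for each good key the rounded estimate equals a single value except with negligible tomography-failure probability, giving condition~(2).

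For \textbf{one-wayness}, I would reduce to the pseudorandomness of the short-PRS. Assuming an inverter $\adv$ succeeds with non-negligible probability, I would build a distinguisher that, given $t$ copies of an unknown state $\ket{\psi}$ (either $\ket{\varphi_k}$ for a random key or a Haar-random $\ket{\nu}$), computes the tomographic description $\hat v$, runs $\adv(\hat v)$ to obtain a candidate preimage $x'$, regenerates $\ket{\varphi_{x'}}$ with $G$, and SWAP-tests it against a fresh copy of $\ket{\psi}$. In the PRS case a successful inversion yields a state sharing the same rounded description, hence $\epsilon$-close to $\ket{\psi}$, so the test accepts with non-negligible probability. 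In the Haar case, the number of distinct short-PRS states is at most $2^{\ell}=2^{\lambda^{c/6}}$, whereas the number of $\epsilon$-separated states in $\mathbb{C}^{d}$ is roughly $(1/\epsilon)^{2d}$ with $d=\lambda^{c}$; since $\lambda^{c}$ dominates $\lambda^{c/6}$, a union bound shows a Haar-random state is $\epsilon$-close to some PRS state only with negligible probability, so the test almost never accepts. The resulting gap contradicts \Cref{def:prs}.

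I expect the main obstacle to be the \emph{parameter balancing} rather than any individual reduction: the grid resolution must be coarse enough that the boundary fraction $\mu$ stays inverse-polynomial (for pseudodeterminism), yet fine enough that distinct PRS states almost never collide after rounding and that a Haar state remains far from all PRS states (for one-wayness), all while keeping the copy count $t$ and the description length $\ell$ polynomial in $\lambda$. Reconciling these competing constraints is precisely what forces the relations $\ell(\lambda)=\lambda^{c/6}$, $\mu(\lambda)=O(\lambda^{-c/12+1})$ and $c>12$, and is the step where I would spend most of the care.
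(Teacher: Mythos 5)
First, note that the paper does not prove this statement at all: it is imported verbatim (with a footnote on parameter adaptation) from~\cite[Theorem 6]{BS23}, whose construction in turn builds on the tomography-based PD-PRG of~\cite{PRNG}. Your sketch is a faithful high-level reconstruction of that construction — tomography of the $O(\log\lambda)$-qubit PRS state plus rounding for pseudodeterminism, and a reduction to the multi-copy indistinguishability of \Cref{def:prs} (via a Haar counting/union-bound argument) for one-wayness — so in spirit you are on the intended route. The paper itself only ever uses the theorem as a black box, and its footnote explicitly disclaims the exact exponents, so your decision not to derive $\ell(\lambda)=\lambda^{c/6}$, $\mu(\lambda)=O(\lambda^{-c/12+1})$ and $c>12$ is consistent with how the result is used here (the constraint $c>12$ is simply what makes $\lambda^{-c/12+1}$ decay, as required by the pseudodeterminism clause of \Cref{def:QPD-OWF}).

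Two concrete gaps remain. The more serious one is in your one-wayness reduction: a single SWAP test between $G(x')$ and a fresh copy of $\ket{\psi}$ accepts with probability $\tfrac12\bigl(1+|\langle\varphi_{x'}|\psi\rangle|^2\bigr)\geq\tfrac12$ \emph{regardless} of whether $\ket{\psi}$ is pseudorandom or Haar, so as written your distinguisher has no advantage; "the test almost never accepts" in the Haar case is false. You must either repeat the SWAP test $\omega(\log\lambda)$ times on fresh copies and accept only if all pass (the PRS-case acceptance stays non-negligible when the overlap is $1-1/\mathrm{poly}$, while the Haar-case acceptance drops to $2^{-\omega(\log\lambda)}+\negl[\lambda]$ by your union bound), or replace the SWAP test by a classical check that the tomography-and-rounding of $G(x')$ reproduces the same grid point. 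The second, smaller gap is that your dithering argument ("by averaging there is a shift for which few keys are bad") only yields a \emph{non-uniform} algorithm, since the good shift is existentially quantified and depends on the fixed family $\{\ket{\varphi_k}\}$; the uniform fix, used in~\cite{PRNG}, is to make the grid shift part of the function's input so that pseudodeterminism holds over a $1-1/\mathrm{poly}$ fraction of the \emph{extended} key space — which is exactly the shape of condition~(1) in \Cref{def:QPD-OWF}.
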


Finally, we will need Kretschmer's (quantum) oracle~$\mathcal{O}$ relative to which OWFs do not exist, but PRSs do.
The former is because $\promiseBQP$ and $\promiseQMA$ are equal relative to this oracle%
, we only include here the definitions of these languages for the sake of self-containment, but we refer the reader to the original paper for a lengthier explanation of the complexity classes.
\begin{definition}
\label{def:promiseQMA}
A promise problem $\lang=\lang_{\mathrm{yes}}\cup\lang_{\mathrm{no}}\cup\lang_{\bot}$ with $\lang\subseteq \{0,1\}^*$ is in $\mathsf{PromiseQMA}$ (\underline{Q}uantum \underline{M}erlin--\underline{A}rthur) if there exists a polynomial-time quantum algorithm $\mathsf{V}(x,\ket{\psi})$ called a \emph{verifier} and a polynomial $p$ such that:
\begin{enumerate}
    \item (Completeness) If $x\in\lang_{\mathrm{yes}}$, then there exists a quantum state $\ket{\psi}$ on $p(|x|)$ qubits (called a \emph{witness} or \emph{proof}) such that $\Pr\left[\mathsf{V}(x,\ket{\psi}) = 1\right] \geq \frac{2}{3}$.
    \item (Soundness) If $x\in\lang_{\mathrm{no}}$, then for every quantum state $\ket{\psi}$ on $p(|x|)$ qubits, $\Pr\left[\mathsf{V}(x,\ket{\psi}) = 1\right] \leq \frac{1}{3}$.
\end{enumerate}
\end{definition}

\begin{definition}
\label{def:promiseBQP}
A promise problem $\lang=\lang_{\mathrm{yes}}\cup\lang_{\mathrm{no}}\cup\lang_{\bot}$ with $\lang\subseteq \{0,1\}^*$ is in $\mathsf{PromiseBQP}$ (\underline{B}ounded-error \underline{Q}uantum \underline{P}olynomial time) if there exists a randomized polynomial-time quantum algorithm $\adv(x)$ such that:
\begin{enumerate}
    \item If $x\in\lang_{\mathrm{yes}}$, then $\Pr\left[\adv(x) = 1\right] \geq \frac{2}{3}$.
    \item If $x\in\lang_{\mathrm{no}}$, then $\Pr\left[\adv(x) = 1\right] \leq \frac{1}{3}$.
\end{enumerate}
\end{definition}

\begin{theorem}[\cite{Kre21}]
\label{thm:kre21}
There exists a quantum oracle \(\mathcal{O}\), such that:
\begin{enumerate}
    \item $\promiseBQP^{\mathcal{O}} = \promiseQMA^{\mathcal{O}}$.
    \item $\lambda$-PRSs exist relative to $\mathcal{O}$.\footnote{The random oracle is a random unitary operation, which gives the existence of PRUs and thus PRSs~\cite{C:JiLiuSon18}.}
\end{enumerate}
\end{theorem}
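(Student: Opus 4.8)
The plan is to construct $\mathcal{O}$ from two ingredients that pull in opposite directions: a Haar-random unitary ensemble that supplies the pseudorandom states, and a classical oracle strong enough to collapse $\promiseQMA$ onto $\promiseBQP$ yet too weak to leak the hidden PRS key. Concretely, I would let $U=\{U_n\}_n$ be a sequence of Haar-random $n$-qubit unitaries and let $A$ be a classical oracle deciding a language that is complete (under polynomial-time reductions) for $\mathsf{PSPACE}$ relative to $U$, and set $\mathcal{O}=(U,A)$.

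For the first part I would establish the containment chain $\promiseQMA^{\mathcal{O}}\subseteq \mathsf{PSPACE}^{\mathcal{O}}=\mathsf{PSPACE}^{U}\subseteq \promiseBQP^{\mathcal{O}}\subseteq \promiseQMA^{\mathcal{O}}$. The outer two inclusions are the easy ones: $\promiseBQP\subseteq\promiseQMA$ holds relative to any oracle, and $\mathsf{PSPACE}^{U}\subseteq\promiseBQP^{\mathcal{O}}$ because a $\promiseBQP$ machine can decide any such language via a polynomial-time reduction to a query to $A$. The middle equality holds because $A$ is itself computable in $\mathsf{PSPACE}^{U}$, so adjoining it adds no power. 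The technical heart is the first inclusion: I need that the maximum acceptance probability of a $\promiseQMA$ verifier, a polynomial-size quantum circuit interleaving its own gates with oracle calls to $\mathcal{O}$, can be estimated to inverse-exponential precision in $\mathsf{PSPACE}^{\mathcal{O}}$ with finite-precision access to the fixed unitary $U$. This follows by writing the optimal-witness acceptance probability as the top eigenvalue of an efficiently described Hermitian operator and evaluating it by $\mathsf{PSPACE}$ linear algebra, handling the classical $A$-queries directly.

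For the second part I would define $\ket{\varphi_k}=U_{n}\ket{k}$ (padding $k\in\{0,1\}^\lambda$ as needed), so that efficient generation is a single oracle call to $U$. Pseudorandomness is where the two ingredients must be reconciled. I would fix an arbitrary QPT adversary $\adv$ and polynomial $t$, and show that with probability $1$ over the choice of $U$ its distinguishing advantage is negligible, then take a union bound over the countably many efficient adversaries to conclude that a single $U$ works for all of them simultaneously. The core estimate is a concentration-of-measure (L\'evy's lemma) argument: the advantage is a Lipschitz function of $U$ on the unitary group, its expectation is negligible because a polynomial-query algorithm given $t$ copies of the marginally Haar-distributed state $U_n\ket{k}$ cannot single out the random column $k$ among exponentially many, and it concentrates exponentially around that expectation.

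The main obstacle I expect is the interaction between the two oracle parts inside the pseudorandomness proof: the adversary also holds query access to $A$, whose definition depends on the very unitary $U$ that generates the challenge, so $U$ and the classical side information are not independent and the clean concentration argument does not apply verbatim. Resolving this requires showing that the key-independent $\mathsf{PSPACE}$ oracle cannot be exploited to extract the hidden index $k$ --- intuitively because $A$ answers questions about $U$ globally while pseudorandomness only needs that no bounded procedure can isolate the one random column. This is precisely the step where Kretschmer's argument does the real work, and it is where I would concentrate my effort.
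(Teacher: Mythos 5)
First, note that the paper does not prove \Cref{thm:kre21} at all: it is imported wholesale from \cite{Kre21}, so your proposal has to be measured against Kretschmer's proof. Your overall architecture --- Haar-random unitaries supplying the pseudorandom states, a classical language $A$ glued on to collapse $\promiseQMA$ onto $\promiseBQP$, and a L\'evy-plus-union-bound argument for security --- is the right skeleton. The genuine gap is in the step you call the ``technical heart'' but treat as routine: the inclusion $\promiseQMA^{\mathcal{O}}\subseteq\mathsf{PSPACE}^{U}$. The standard proof that $\mathsf{QMA}\subseteq\mathsf{PSPACE}$ relativizes only with respect to \emph{classical} oracles. It needs the matrix entries of the verification operator $V^{\dagger}\Pi V$, and when $V$ contains black-box gates $U_n$ acting on polynomially many qubits those entries involve $4^{n}$ inaccessible complex parameters: a polynomial-space machine cannot store a description of $U_n$, and estimating each required entry to the inverse-exponential precision that the recursive $\mathsf{PSPACE}$ linear algebra demands is not something ``finite-precision access to the fixed unitary'' provides. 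So the operator is not ``efficiently described,'' the containment chain breaks exactly at the link carrying all the weight, and nothing in the proposal repairs it.

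Kretschmer's proof does not go through $\mathsf{PSPACE}$. He places into the oracle a classical language $A$ that is $\promiseQMA^{\mathcal{U},A}$-complete (built stage by stage over input lengths to break the circularity), so that $\promiseQMA^{\mathcal{O}}\subseteq\promiseBQP^{\mathcal{O}}$ holds essentially by construction; the technical content is then entirely in showing that this very powerful, $\mathcal{U}$-dependent oracle does not destroy pseudorandomness. That is handled by a measure-concentration argument on the unitary group showing that, with probability $1$ over $\mathcal{U}$, the maximum acceptance probability of every relevant verifier circuit lies within the promise gap of a value that does not depend on the individual large unitaries, so $A$ leaks essentially nothing about the challenge key --- exactly the obstacle you correctly flag at the end but then defer rather than resolve. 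This concentration degrades as the dimension of the unitaries shrinks, which is why the present paper can remark that oracles for ``short'' PRUs do not yield $\promiseBQP=\promiseQMA$; a dimension-independent $\mathsf{PSPACE}$-based collapse of the kind you assert would sit in direct tension with that remark and with the mechanism behind \Cref{thm:main}. (A smaller point: Kretschmer uses $2^{n}$ independent Haar-random unitaries per length and sets $\ket{\varphi_k}=U_{n,k}\ket{0^n}$; your single-unitary variant $U_n\ket{k}$ produces a random orthonormal basis and needs a separate multi-copy moment calculation, though this is not where the real difficulty lies.)
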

\section{Separating PRSs from short-PRSs}

In this section, we prove our main theorem, that we can separate poly-size PRSs and log-size PRSs.
Formally, we will be proving the following theorem.
\begin{theorem}
  \label{thm:main}
  There exists a quantum oracle \(\mathcal{O}\) such that relative to \(\mathcal{O}\), \(\lambda\)-PRSs exist, but
  $(c\log \lambda)$-PRSs with $c>12$ do not exist.
\end{theorem}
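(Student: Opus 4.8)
The plan is to instantiate \(\mathcal{O}\) as Kretschmer's oracle from \Cref{thm:kre21}, so that \(\lambda\)-PRSs already exist relative to \(\mathcal{O}\) and \(\promiseBQP^{\mathcal{O}} = \promiseQMA^{\mathcal{O}}\); it then only remains to rule out \((c\log\lambda)\)-PRSs with \(c>12\). I would argue by contradiction: if such short-PRSs existed relative to \(\mathcal{O}\), then by \Cref{thm:OWF} there would exist a (polynomial-error) PD-OWF \(F\) relative to \(\mathcal{O}\). The heart of the argument is a new implication, ``a PD-OWF implies \(\promiseBQP \neq \promiseQMA\)'', which contradicts the first property of Kretschmer's oracle and closes the proof.

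To prove that implication I would attach to \(F\) a promise problem \(\lang_F\) that encodes bit-by-bit inversion. On input \((1^\lambda, y, z)\) with \(z\) a prefix of a candidate preimage, put \((y,z)\) in \(\lang_{\mathrm{yes}}\) if there exists \(x\) with \(z \prec x\) and \(\pr{F_\lambda(x)=y} \geq 2/3\), and in \(\lang_{\mathrm{no}}\) if every \(x\) with \(z \prec x\) satisfies \(\pr{F_\lambda(x)=y} \leq 1/3\). Membership in \(\promiseQMA\) is immediate: the witness is a classical preimage \(x\), and the verifier runs \(F_\lambda(x)\) once while checking \(z \prec x\); the \([1/3,2/3]\) gap yields completeness and soundness (soundness even against quantum witnesses, since by linearity measuring the witness register reduces any strategy to a mixture over classical \(x\)). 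Since \(\promiseBQP^{\mathcal{O}} = \promiseQMA^{\mathcal{O}}\), there is then a \(\promiseBQP\) decider \(D\) for \(\lang_F\), which I would amplify to error \(\negl[\lambda]\) on yes- and no-instances.

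Armed with \(D\), I would build an inverter: sample \(x \leftarrow \{0,1\}^{m(\lambda)}\), set \(y = F_\lambda(x)\), and grow a prefix one bit at a time, at each step querying \(D\) on both one-bit extensions and keeping one labelled ``yes''. The gap is exactly what should make this work. By pseudodeterminism (\Cref{def:QPD-OWF}) a uniform \(x\) is a good key with probability \(1-\mu(\lambda)\), and then its canonical output \(y\) has \(\pr{F_\lambda(x)=y}\geq 1-\negl[\lambda] \geq 2/3\); hence the true preimage sits strictly above the high threshold, so along the correct branch every prefix is a genuine yes-instance and is labelled ``yes''. Since \(D\) correctly rejects every genuine no-instance, any branch it calls ``yes'' must extend to some \(x'\) with \(\pr{F_\lambda(x')=y}> 1/3\); following such branches preserves the invariant that the current prefix extends to a preimage of probability more than \(1/3\), so the leaf reached is an \(x'\) with \(\pr{F_\lambda(x')=y}> 1/3\). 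Running \(F_\lambda\) once on this \(x'\) returns \(y\) with probability exceeding \(1/3\), so the inverter violates the security bound \eqref{eq:sec}, the desired contradiction; combined with the existence of \(\lambda\)-PRSs relative to \(\mathcal{O}\), this proves the theorem.

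The step I expect to be the main obstacle is controlling the promise-gap (the \(\bot\)) instances during the search: a one-bit extension whose best extending preimage has probability strictly inside \((1/3,2/3)\) is outside the promise, so \(D\) may answer arbitrarily on it. A \(\bot\)-branch mislabelled ``yes'' is harmless for the invariant, but a \(\bot\)-branch mislabelled ``no'' could trap the search in a subtree containing no high-probability preimage and leave it with both children rejected. The gap is placed precisely to tame this — genuinely-no branches are pruned correctly, and the near-determinism of good keys forces the correct branch to be a true yes — but making the invariant argument fully rigorous, in particular arguing that the search never becomes stuck so that it always outputs some preimage of probability \(> 1/3\), is the delicate part, and is where the constant-size gap and the \(1-\negl[\lambda]\) consistency of good keys must be used carefully.
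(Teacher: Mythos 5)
Your proposal follows essentially the same route as the paper's proof: instantiate \(\mathcal{O}\) as Kretschmer's oracle (\Cref{thm:kre21}), use \Cref{thm:OWF} to turn a hypothetical \((c\log\lambda)\)-PRS into a PD-OWF, and contradict \(\promiseBQP^{\mathcal{O}}=\promiseQMA^{\mathcal{O}}\) via a prefix-search promise language placed in \(\promiseQMA\) and a bit-by-bit inverter built from the hypothetical \(\promiseBQP\) decider (\Cref{pro:owf}, \Cref{lemma:linqma}, \Cref{lemma:lnotinbqp}); the only substantive differences are that the paper places the gap at \(\bigl[1-\tfrac{1}{\lambda},\,1-\negl[\lambda]\bigr]\) and has the verifier repeat \(F_\lambda\) \(2\lambda\) times, whereas you use a constant \([1/3,2/3]\) gap with a single evaluation, which serves the same purpose. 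The \(\bot\)-branch difficulty you flag is genuine, but it is precisely the step the paper's own proof also leaves implicit when asserting \Cref{eq:algo} --- its Algorithm~2 queries only the \(0\)-extension and defaults to the \(1\)-extension on rejection, which avoids your ``both children rejected'' deadlock but can still be diverted into a bad subtree when a \(\bot\)-child containing the good preimage is answered ``no'' --- so your honest uncertainty there does not reflect a resolution that the paper supplies and you missed.
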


The oracle necessary for the separation is actually the same oracle that Kretschmer used to separate PRSs and OWFs (\Cref{thm:kre21}).
It turns out that this oracle is also separating PD-OWFs from PRSs, we prove here that if~$\promiseBQP=\promiseQMA$, then we do not have PD-OWFs. This in addition with Barhoush and Salvail's result that short-PRSs are enough to build PD-OWFs (\Cref{thm:OWF}) will give us the theorem.

According to the above considerations, the main theorem follows directly from the following proposition.
\begin{proposition}
    \label{pro:owf}
    If PD-OWFs exist, then~$\promiseBQP\neq\promiseQMA$.   
\end{proposition}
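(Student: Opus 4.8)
The plan is to establish the contrapositive: assuming $\promiseBQP=\promiseQMA$, I build from any purported PD-OWF $F\colon\{0,1\}^{m}\to\{0,1\}^{\ell}$ (with pseudodeterminism parameter $\mu(\lambda)=O(\lambda^{-c})$ as in \Cref{def:QPD-OWF}) a QPT inverter violating the security requirement~\eqref{eq:sec}. The engine is a search-to-decision reduction driven by a promise language tailored to the probability gap that pseudodeterminism forces. First I would define a promise problem $\lang$ whose instances are triples $(1^\lambda,y,z)$ with $y\in\{0,1\}^{\ell}$ and $z$ a prefix of a potential preimage: put $(1^\lambda,y,z)\in\lang_{\mathrm{yes}}$ when some $x$ with $z\prec x$ satisfies $\pr{F_\lambda(x)=y}\ge 2/3$, put $(1^\lambda,y,z)\in\lang_{\mathrm{no}}$ when every $x$ with $z\prec x$ satisfies $\pr{F_\lambda(x)=y}\le 1/3$, and place all remaining instances in $\lang_{\bot}$. (Recall that $\prec$ is non-strict, so the case $z=x$ of length $m$ is included.)

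Next I would show $\lang\in\promiseQMA$. The verifier, on instance $(1^\lambda,y,z)$ and witness register, measures the witness in the computational basis to obtain a classical candidate $x$, checks $z\prec x$, runs $F_\lambda(x)$ once, and accepts iff the output equals $y$. For a yes-instance the honest basis-state witness achieves acceptance probability $\pr{F_\lambda(x)=y}\ge 2/3$ (completeness), while for a no-instance any, possibly entangled, witness accepts with probability at most $\max_{x:\,z\prec x}\pr{F_\lambda(x)=y}\le 1/3$, since measuring first only convexly averages the per-$x$ acceptance probabilities (soundness). By the hypothesis $\promiseBQP=\promiseQMA$ there is then a randomized polynomial-time quantum algorithm deciding $\lang$ on its promise, and standard amplification yields a decider $B$ with error at most $2^{-\lambda}$ on every promise instance.

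The inverter $\adv$ on input $y$ performs a bit-by-bit prefix search: starting from $z=\epsilon$, at each of the $m$ positions it queries $B$ on both children $z\|0$ and $z\|1$ and descends into a child that $B$ accepts, finally outputting $x'=z$. This makes $O(m)$ calls to $B$ and is QPT. The root is a yes-instance with high probability: by item~1 of \Cref{def:QPD-OWF} a uniform $x$ lands in $\mathcal{K}_\lambda$ except with probability $\mu(\lambda)$, and then by~\eqref{eq:pd} its sampled image $y=F_\lambda(x)$ equals the dominant output except with negligible probability, so $\pr{F_\lambda(x)=y}\ge 1-\negl[\lambda]\ge 2/3$ and $(1^\lambda,y,\epsilon)\in\lang_{\mathrm{yes}}$. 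As long as the search stays inside yes-instances the standard argument goes through: a child in $\lang_{\mathrm{no}}$ is rejected by $B$, and if one child is in $\lang_{\mathrm{no}}$ the preimage-carrying sibling must itself be a yes-instance, so the walk terminates at a leaf $x'$ with $\pr{F_\lambda(x')=y}\ge 2/3$. Feeding such an $x'$ into the security experiment~\eqref{eq:sec} makes a fresh evaluation of $F$ match $y$ with probability at least $2/3$; multiplying by the inverse-polynomially-bounded-away-from-one probability of the good root event and of a successful search yields a non-negligible inversion probability, contradicting~\eqref{eq:sec}.

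The step I expect to be the main obstacle is controlling the $\lang_{\bot}$ instances, i.e.\ the usual promise-gap issue of search-to-decision: $B$ is reliable only on the promise, so if the walk ever steps into a $\bot$-child, on which $B$ may accept spuriously, it can leave the yes-region and end at a leaf $x'$ with small $\pr{F_\lambda(x')=y}$. To handle this I would exploit the \emph{bimodality} that pseudodeterminism imposes: for a good key $x\in\mathcal{K}_\lambda$, \eqref{eq:pd} forces $\pr{F_\lambda(x)=y}$ to be either at least $1-\negl[\lambda]$ (for its unique dominant image) or at most $\negl[\lambda]$ (for every other $y$), so it never falls into the gap $(1/3,2/3)$. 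Hence a $\bot$-node can arise only because some \emph{bad} key places weight in the gap for $y$, and such keys form only an $O(\mu(\lambda))$-fraction of the domain while each of them is gap-heavy for at most two values of $y$ (probabilities summing to one allow at most two disjoint events of weight exceeding $1/3$). The plan is therefore to bound, over the random draw of $x$ and hence of $y=F_\lambda(x)$, the probability that the root-to-$x'$ path ever meets a $\bot$-sibling, using that the sibling subtrees along a path are disjoint so that each contributing $\bot$-node consumes a distinct gap-heavy bad key. Combined with the observation that it already suffices to recover \emph{any} $x'$ with non-negligible $\pr{F_\lambda(x')=y}$ to beat the negligible bound of~\eqref{eq:sec}, a search that succeeds with only constant probability still yields the contradiction. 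Turning this counting into a clean bound that is robust to the arbitrary behaviour of $F$ on bad keys is the delicate part, and is exactly where the promise (rather than decision) formulation emphasised in the introduction becomes essential.
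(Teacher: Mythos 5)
Your proposal follows essentially the same route as the paper: you define a promise language whose yes-instances are pairs (image, prefix) admitting a high-probability preimage extending the prefix and whose no-instances admit none above a lower threshold, you put it in $\promiseQMA$ via a verifier that measures the witness and evaluates $F_\lambda$, and you convert the hypothetical $\promiseBQP$ decider into an inverter by a bit-by-bit prefix search, using membership in $\mathcal{K}_\lambda$ to argue the root is a yes-instance and then contradicting~\Cref{eq:sec}. The differences from the paper are cosmetic: the paper uses thresholds $1-\negl[\lambda]$ versus $1-\frac{1}{\lambda}$ (so its verifier repeats the evaluation $2\lambda$ times to obtain constant completeness/soundness, where your $2/3$ versus $1/3$ choice needs only one evaluation), and it pads instances with $1^{\ell(\lambda)}$.

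The step you flag as unresolved --- controlling descents into $\lang_\bot$ during the search --- is a genuine gap in your write-up, and your sketched repair does not obviously close it. There can be up to $\mu(\lambda)2^{\ell(\lambda)}$ bad keys, and nothing in \Cref{def:QPD-OWF} prevents them from being gap-heavy precisely for the values $y=F_\lambda(x)$ that good keys typically produce (e.g.\ if the good keys' dominant images concentrate on few strings), so the observation that each $\bot$-sibling along the path consumes a distinct gap-heavy bad key bounds a \emph{count}, not the \emph{probability over $y$} of ever meeting such a sibling. Your other safety valve is sound but incomplete: terminating at a gap key is indeed harmless (it still maps to $y$ with probability above the lower threshold), but after entering a $\bot$ subtree the walk can subsequently slide into a no-region and end at a leaf with negligible success probability, and that is the failure mode that remains unbounded. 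You should know, however, that the paper's own proof of \Cref{lemma:lnotinbqp} does not address this either: it asserts the success bound $(1-\frac{1}{\ell(\lambda)})^{\ell(\lambda)+1}$ for the prefix search as though every queried instance satisfied the promise. So your instinct has isolated a real subtlety that the published argument elides rather than a point where you diverge from it; to call your proof (or the paper's) complete, this case analysis must be carried out or the language redefined so that $\bot$-children cannot misdirect the walk.
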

\begin{proof}
Let $\lambda\in\mathbb{N}$ and $F:\dom\to\{0,1\}^{\ell(\lambda)}$ be a PD-OWF. Let us define a promise language $\lang = \lang_\textrm{yes}\cup\lang_\textrm{no}\cup\lang_\bot$ with $\lang\subseteq\{0,1\}^*$ where \emph{yes} instances have a pre-image with respect to $F_\lambda$ but \emph{no} instances do not. Formally,
\begin{multline}
  \lang_\textrm{yes} = \left\{(1^{\ell(\lambda)},x',y)\in \{1^{\ell(\lambda)}\}\times\dom\times\{0,1\}^{\ell(\lambda)} \middle\rvert\right. \\ \left.\exists x\in \dom, x' \prec x \text{ and } \Pr[y=F_{\lambda}(x)]\geq 1-\negl[\lambda] \right\}, \label{eq:langyes}
\end{multline}
\begin{multline}
  \lang_\textrm{no} = \left\{(1^{\ell(\lambda)},x',y)\in \{1^{\ell(\lambda)}\}\times\dom\times\{0,1\}^{\ell(\lambda)} \middle\rvert \right.\\ \left.\forall x\in \dom, x' \nprec x \text{ or } \Pr[y=F_{\lambda}(x)]\leq 1-\frac{1}{\lambda} \right\}. \label{eq:langno}  
\end{multline}
We claim that $\lang\in\promiseQMA$ but $\lang\not\in\promiseBQP$, thus there must be a separation between both complexity classes.
These claims are proven in~\Cref{lemma:linqma} and~\Cref{lemma:lnotinbqp} respectively.
\end{proof}

We now prove the two claims from the proposition. We start by showing that the language defined in~\cref{pro:owf} is in $\promiseQMA$, this is, we will construct an algorithm (verifier) that given an element of the domain and a (quantum) proof can distinguish if the element is a yes or no instance of the language.

\begin{lemma}
\label{lemma:linqma}
  Let \(F\) be a PD-OWF and let $\lang$ be the language defined in~\Cref{pro:owf}, then $\lang \in \promiseQMA$.
\end{lemma}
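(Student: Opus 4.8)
The plan is to construct a $\promiseQMA$ verifier $\mathsf{V}$ that, on input $(1^{\ell(\lambda)}, x', y)$ together with a quantum witness $\ket{\psi}$, checks that $\ket{\psi}$ encodes a preimage $x$ of $y$ under $F_\lambda$ that is consistent with the prefix promise $x' \prec x$. Concretely, the intended witness is (a classical description of) a string $x$ with $x' \prec x$ satisfying $\Pr[y = F_\lambda(x)] \geq 1 - \negl[\lambda]$. The verifier first measures the witness register in the computational basis to obtain a candidate $x$, verifies syntactically that $x' \prec x$ (rejecting if not), and then runs the $\mathsf{QPT}$ algorithm $F_\lambda$ on $x$ and accepts if and only if the output equals $y$.

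First I would argue completeness: if $(1^{\ell(\lambda)}, x', y) \in \lang_\textrm{yes}$, then by definition there exists $x$ with $x' \prec x$ and $\Pr[y = F_\lambda(x)] \geq 1 - \negl[\lambda]$. Taking $\ket{\psi} = \ket{x}$ (padded to $p(|x|)$ qubits) as the honest witness, the prefix check passes deterministically and the single invocation of $F_\lambda$ returns $y$ except with negligible probability, so $\Pr[\mathsf{V} = 1] \geq 1 - \negl[\lambda] \geq \tfrac{2}{3}$ for large $\lambda$. Next I would argue soundness: if $(1^{\ell(\lambda)}, x', y) \in \lang_\textrm{no}$, then for \emph{every} $x$, either $x' \nprec x$ or $\Pr[y = F_\lambda(x)] \leq 1 - \tfrac{1}{\lambda}$. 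Since the verifier measures the witness first, we may condition on the measured outcome $x$: any $x$ failing the prefix check is rejected outright, and any $x$ passing it satisfies $\Pr[y = F_\lambda(x)] \leq 1 - \tfrac{1}{\lambda}$, so acceptance probability is at most $1 - \tfrac{1}{\lambda}$. By linearity over the measurement distribution, for \emph{any} witness state the acceptance probability is bounded by $1 - \tfrac{1}{\lambda}$.

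The main obstacle is that $1 - \tfrac{1}{\lambda}$ is not below the soundness threshold $\tfrac{1}{3}$ of \Cref{def:promiseQMA}; the gap between completeness ($\geq 1 - \negl$) and soundness ($\leq 1 - \tfrac{1}{\lambda}$) is only inverse-polynomial, whereas $\promiseQMA$ demands a constant gap. The fix is amplification: I would have the verifier run the $F_\lambda$ evaluation $k = \Theta(\lambda)$ times on independently prepared copies and take a suitable threshold (e.g.\ majority or ``all outputs equal $y$''), boosting the completeness–soundness gap to a constant. Care is needed here because a single quantum witness cannot be freely cloned, but the measured classical string $x$ \emph{can} be copied, so the honest prover sends one witness, the verifier measures to extract the classical $x$, and then reruns the randomized $F_\lambda$ as many times as needed on that fixed $x$. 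This reduces the amplification to a standard Chernoff bound on the Bernoulli random variable $[y = F_\lambda(x)]$, separating the $\geq 1 - \negl$ and $\leq 1 - \tfrac{1}{\lambda}$ cases with overwhelming confidence, which pushes completeness above $\tfrac{2}{3}$ and soundness below $\tfrac{1}{3}$ and places $\lang$ in $\promiseQMA$.
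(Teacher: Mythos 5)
Your proposal is correct and takes essentially the same approach as the paper: the paper's verifier takes a classical witness $x$, rejects if $x' \nprec x$, and accepts iff all of $2\lambda$ independent evaluations of $F_\lambda(x)$ equal $y$, yielding completeness $(1-\negl[\lambda])^{2\lambda} \geq 2/3$ and soundness $(1-1/\lambda)^{2\lambda} \leq e^{-2} \leq 1/3$, exactly your ``measure, check prefix, repeat $F_\lambda$ $\Theta(\lambda)$ times'' amplification. One minor caveat: of your two suggested thresholds only the ``all outputs equal $y$'' test works, since both $1-\negl[\lambda]$ and $1-\tfrac{1}{\lambda}$ exceed $\tfrac12$ and a plain majority vote cannot separate them.
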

\begin{proof}
      We define a quantum polynomial-time algorithm \(\adv\) that given an element of the domain (\(1^{\ell(\lambda)},x',y) \in \{1^{\ell(\lambda)}\}\times\dom\times\{0,1\}^{\ell(\lambda)}\) and a classical proof
      $x\in \{0,1\}^{\ell(\lambda)}$, will check if the proof $x$ is indeed a pre-image of the PD-OWF by checking if it coincides with the input $y$.
    \pcb[linenumbering, head={\textbf{Algorithm 1}: $\adv((1^{\ell(\lambda)},x',y),x)$}]{
    \pcif x' \nprec x: \\
    \t \pcreturn 0 \\
        \pcfor 1 \leq i \leq 2\lambda: \\ 
        \t \pcif F_{\lambda}(x) \neq y: \\
    \t \t \pcreturn 0 \\
    \pcreturn 1}
    Note that the algorithm runs in polynomial-time trivially because computing $F_\lambda$ is done efficiently by definition and we make $2\lambda$ calls to it. We now prove that the algorithm distinguishes between the yes/no instances.
    
    \noindent (i) Let \((1^{\ell(\lambda)},x',y) \in \lang_\textrm{yes}\).
    Then by definition there exists a proof $x\in \dom$ such that
    \begin{align*}
        x' \prec x \text{ and } \Pr[y=F_\lambda(x)]\geq 1-\negl[\lambda].
    \end{align*}
    Then the proof $x$ will be an element of the input $((1^{\ell(\lambda)},x',y),x)$ for which the algorithm $\adv$ will output $1$ with high probability because
    \begin{align*}
        \pr{\adv((1^{\ell(\lambda)},x',y),x)=1} &= \pr{\forall 1 \leq i \leq 2\lambda, \text{ the execution of } F_{\lambda}(x) \text{ outputs y}} \\
        &\geq \left(1-\negl[\lambda]\right)^{2\lambda} 
        \geq 2/3,
    \end{align*}
    which holds whenever $\lambda\geq6$. Indeed, recall that $\negl[\lambda]\leq1/\lambda^c$ for all $c>1$, thus in particular $\negl[\lambda]\leq1/\lambda^2$, hence
    \begin{align*}
        (1-\negl[\lambda])^{2\lambda}\geq\left(1-\frac{1}{\lambda^2}\right)^{2\lambda}\geq\frac{2}{3},
    \end{align*}
    whenever $\lambda\geq 6$, where we used that we have an increasing function in $\lambda$.

    \noindent (ii) Let \((1^{\ell(\lambda)},x',y) \in \lang_\textrm{no}\). Then by definition for every potential proof $x\in\dom$ we have that either
    \begin{align*}
        x'\nprec x \quad\text{or}\quad\Pr[y=F_\lambda(x)]\leq 1-\frac{1}{\lambda}.
    \end{align*}
    Then for every possible input $((1^{\ell(\lambda)},x',y),x)$ the algorithm will output $0$ with high probability because
    \begin{align*}
        \pr{\adv((1^{\ell(\lambda)},x',y),x)=1} &= \pr{x'\nprec x\wedge\forall 1 \leq i \leq 2\lambda, \text{ the execution of } F_{\lambda}(x) \text{ outputs y}} \\
        &\leq \pr{\forall 1 \leq i \leq 2\lambda, \text{ the execution of } F_{\lambda}(x) \text{ outputs y}}\\
        &\leq \left(1-\frac{1}{\lambda}\right)^{2\lambda} 
         \leq e^{-2}
        \leq 1/3.
    \end{align*}
\end{proof}

\begin{lemma}
\label{lemma:lnotinbqp}
  Let \(F\) be a PD-OWF and let $\lang$ be the language defined in~\Cref{pro:owf}, then $\lang \notin \promiseBQP$.
\end{lemma}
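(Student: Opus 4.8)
The plan is to prove the contrapositive: assuming $\lang\in\promiseBQP$, I would build an efficient inverter for $F$ and thereby contradict the security of the PD-OWF (\Cref{eq:sec}). Let $\adv$ be the $\promiseBQP$ decider for $\lang$; by repeating it polynomially many times and taking a majority vote I would first boost it to a decider $\adv'$ whose error on \emph{yes} and \emph{no} instances is at most $2^{-\lambda}$, while of course retaining no guarantee on $\lang_\bot$. The inverter $\mathcal{B}$ receives a challenge $y=F(x)$ for $x\leftarrow\dom$. By the pseudodeterminism of $F$ (\Cref{def:QPD-OWF}), with probability at least $1-\mu(\lambda)$ over the challenge we have $x\in\mathcal{K}_\lambda$, and then, except with negligible probability over the randomness of $F$, the realized output satisfies $\Pr[y=F_\lambda(x)]\geq 1-\negl[\lambda]$. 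Conditioning on this event, the root instance $(1^{\ell(\lambda)},\epsilon,y)$ is a \emph{yes} instance, since $\epsilon\prec x$ witnesses \Cref{eq:langyes}.

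The core is a prefix search. Starting from $x'=\epsilon$, I would extend $x'$ one bit at a time, running $\adv'$ on the two children $(1^{\ell(\lambda)},x'0,y)$ and $(1^{\ell(\lambda)},x'1,y)$ to decide which bit to append, always moving to a child that $\adv'$ flags as \emph{yes}, and outputting the full-length string reached after $\ell(\lambda)$ steps. Two structural facts drive correctness. First, if $(1^{\ell(\lambda)},x',y)$ is a \emph{yes} instance and $x'$ is not yet full length, then at least one child is again \emph{yes} (take the witness $x$ and follow its next bit); more generally, if $x'$ is not a \emph{no} instance then at least one child is not a \emph{no} instance. Second, for a full-length string $x''$, being not a \emph{no} instance is \emph{equivalent} to $\Pr[y=F_\lambda(x'')]>1-\tfrac1\lambda$, i.e.\ to $x''$ being a genuine pre-image of $y$ with non-negligible probability. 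Hence, if the search can be kept on instances that are never \emph{no}, its output inverts $F$ with probability larger than $1-\tfrac1\lambda$, which combined with the $1-\mu(\lambda)$ fraction of good keys contradicts \Cref{eq:sec}.

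The main obstacle is precisely the promise-gap region $\lang_\bot$, on which $\adv'$ carries no guarantee: the gap between the $1-\negl[\lambda]$ threshold of \Cref{eq:langyes} and the $1-\tfrac1\lambda$ threshold of \Cref{eq:langno} is exactly what creates these uncontrolled instances, and the delicate point is to show the search cannot be trapped there. I would maintain the invariant that the current prefix is not a \emph{no} instance: whenever $\adv'$ flags a child \emph{yes} and we move there, that child fails to be \emph{no} except with probability $2^{-\lambda}$, so a union bound over the $\ell(\lambda)$ steps keeps us off every true \emph{no} instance, provided that at each step some child is flagged \emph{yes}. Here the gap helps twice: along the witness path every prefix is a \emph{yes} instance, on which $\adv'$ answers \emph{yes} reliably, and one checks that at a \emph{yes} node whose sibling lies in $\lang_\bot$ the other child is forced to be \emph{yes}, so a \emph{yes}-flagged child is always available there. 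The crux, which I expect to be the genuinely hard part, is controlling excursions into $\lang_\bot$: showing either that the navigation rule never leaves the \emph{yes} instances, or that once in $\lang_\bot$ the search still terminates at a non-\emph{no} leaf, so that $\mathcal{B}$ outputs a valid pre-image with probability bounded away from any negligible function.
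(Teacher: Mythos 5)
Your proposal follows the same route as the paper: assume a $\promiseBQP$ decider $\adv$ for $\lang$, amplify it, use it to drive a bit-by-bit prefix search that recovers a pre-image of $y=F(x)$, and then combine pseudodeterminism (which guarantees that for the $1-\mu(\lambda)$ fraction of keys in $\mathcal{K}_\lambda$ the root instance $(1^{\ell(\lambda)},\epsilon,y)$ is a \emph{yes} instance) with the inverter's success probability to contradict \Cref{eq:sec}. The only cosmetic differences are that the paper amplifies only to completeness $1-1/\ell(\lambda)$ and soundness $1/\ell(\lambda)$, so its success bound is $(1-1/\ell(\lambda))^{\ell(\lambda)+1}\geq 1/10$ rather than $1-\negl[\lambda]$, and that its Algorithm~2 queries only the child $x_0||0$ and defaults to $x_0||1$ upon rejection instead of querying both children and following a \emph{yes}-flagged one.

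The step you single out as the crux --- showing that the search cannot be derailed through $\lang_\bot$ into a \emph{no} leaf --- is precisely the step the paper does not spell out either: \Cref{eq:algo} is asserted with the bound $(1-1/\ell(\lambda))^{\ell(\lambda)+1}$, which implicitly charges an error of at most $1/\ell(\lambda)$ to each of the $\ell(\lambda)+1$ queries, i.e.\ implicitly assumes every queried instance lies in $\lang_{\mathrm{yes}}\cup\lang_{\mathrm{no}}$. Your worry is substantive: a \emph{yes} node can have a sibling in $\lang_\bot$ (a prefix some extension of which attains probability strictly between $1-\frac{1}{\lambda}$ and $1-\negl[\lambda]$), the decider is free to flag that sibling and pull the search into $\lang_\bot$, and from a node of $\lang_\bot$ whose other child is a \emph{no} instance the decider can, with no probability control, push the search onto a \emph{no} leaf, where the output need not be a pre-image at all. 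So you have reconstructed the paper's argument at essentially the same level of detail; the part you leave open is also left implicit in the paper's write-up, and closing it would require an additional structural argument about where $\lang_\bot$ instances can arise for challenges $y=F(x)$ with $x\in\mathcal{K}_\lambda$ (note, for instance, that inputs in $\mathcal{K}_\lambda$ never produce gap probabilities, only the $\mu(\lambda)$-fraction outside it can), or else a modification of the language or of the search rule.
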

\begin{proof}
  We will prove this by contradiction. Let us assume that instead \(\lang \in \promiseBQP\), this is, there exists a BQP algorithm \(\adv\) such that:
  \begin{enumerate}
    \item If \((1^{\ell(\lambda)},x',y) \in \lang_{\mathrm{yes}}\), then \(\pr{\adv(1^{\ell(\lambda)},x',y)=1} \geq 2/3\).
    \item If \((1^{\ell(\lambda)},x',y) \in \lang_\mathrm{no}\), then \(\pr{\adv(1^{\ell(\lambda)},x',y)=1} \leq 1/3\).
  \end{enumerate}
  Without loss of generality, we can assume that the algorithm \(\adv\) has completeness \(1-\frac{1}{\ell(\lambda)}\) and soundness \(\frac{1}{\ell(\lambda)}\).
  We will now show how we can construct a QPT algorithm \(\adv'\) that finds a pre-image of every \(F_\lambda\) with high probability when it exists, by querying the original BQP algorithm \(\adv\) at most \(\ell(\lambda)+1\) times.
  \pcb[linenumbering, head={\textbf{Algorithm 2}: $\adv'(1^{\ell(\lambda)},y)$}]{
    b \leftarrow \adv(1^{\ell(\lambda)},\epsilon,y) \\
    \pcif b=0: \\
    \t \pcreturn \bot \\
    x_0 \leftarrow \epsilon \\
    \pcfor 1 \leq i \leq \ell(\lambda):\\
    \t b \leftarrow \adv(1^{\ell(\lambda)},x_0 || 0,y)\\
    \t \pcif b=1:\\
    \t \t x_0 = x_0 || 0\\
    \t \pcelse:\\
    \t \t x_0 = x_0 || 1\\
    \pcreturn x_0
    }
  Indeed if $(1^{\ell(\lambda)},\epsilon,y)\in\lang_\mathrm{yes}$, then the probability that the algorithm $\adv '$ outputs a correct pre-image is very high
  \begin{equation}
  \label{eq:algo}
      \pr{y = \argmax_{y\in\{0,1\}^{\ell(\lambda)}}\pr{y=F_\lambda(x)}\given{x\leftarrow \adv'\left(1^{\ell(\lambda)},y\right)}}\geq \left(1-\frac{1}{\ell(\lambda)}\right)^{\ell(\lambda)+1}.
  \end{equation}
  However, this raises a contradiction with the security of the PD-OWF from the assumption~\Cref{def:QPD-OWF}, %
\begin{align*}
  &\pr[x\leftarrow \{0,1\}^{\ell(\lambda)}]{F_{\lambda}(\adv'(1^{\ell(\lambda)}, F_{\lambda}(x)))=F_{\lambda}(x)}\\
  &\hspace{2cm}= \pr[x\leftarrow \{0,1\}^{\ell(\lambda)}]{F_{\lambda}(\adv'(1^{\ell(\lambda)}, F_{\lambda}(x)))=F_{\lambda}(x) \given{x \in \mathcal{K}_{\lambda}}} \pr[x\leftarrow \{0,1\}^{\ell(\lambda)}]{x \in \mathcal{K}_{\lambda}} \\
  &\hspace{3cm}  + \pr[x\leftarrow \{0,1\}^{\ell(\lambda)}]{F_{\lambda}(\adv'(1^{\ell(\lambda)}, F_{\lambda}(x)))=F_{\lambda}(x) \given{x \notin \mathcal{K}_{\lambda}}} \pr[x\leftarrow \{0,1\}^{\ell(\lambda)}]{x \notin \mathcal{K}_{\lambda}} \\
  &\hspace{2cm}\geq \pr[x\leftarrow \{0,1\}^{\ell(\lambda)}]{F_{\lambda}(\adv'(1^{\ell(\lambda)}, F_{\lambda}(x)))=F_{\lambda}(x) \given{x \in \mathcal{K}_{\lambda}}} \pr[x\leftarrow \{0,1\}^{\ell(\lambda)}]{x \in \mathcal{K}_{\lambda}} \\
  &\hspace{2cm}\geq \pr[x\leftarrow \{0,1\}^{\ell(\lambda)}]{F_{\lambda}(\adv'(1^{\ell(\lambda)}, F_{\lambda}(x)))=F_{\lambda}(x) \given{x \in \mathcal{K}_{\lambda}}} \left(1 - \mu(\lambda)\right), \\
\end{align*}
where the first equality comes from the law of total probability and the second inequality comes from the property of \(\mathcal{K}_{\lambda}\).
We can rewrite the last element as:
\begin{align*}
  &\pr[x\leftarrow \{0,1\}^{\ell(\lambda)}]{y_3=y_1 \given{ y_1,y_2,y_3 \leftarrow F_{\lambda}(x), x_1 \leftarrow \adv'(1^{\ell(\lambda)}, y_2), x\in\mathcal{K}_{\lambda}}} \\
  &\hspace{1cm}\geq \pr[x\leftarrow \{0,1\}^{\ell(\lambda)}]{\left(y_3=y_2=y_1=\argmax_{y\in \{0,1\}^{\ell(\lambda)}}{\Pr[y=F_\lambda(x)]}\right) \land (x_1=x) \given{
  \begin{array}{c}
  y_1,y_2,y_3 \leftarrow F_{\lambda}(x)\\ x_1 \leftarrow \adv'(1^{\ell(\lambda)},y_2), x\in\mathcal{K}_{\lambda}
  \end{array}
  }}\\
  &\hspace{1cm}= \pr[x\leftarrow \{0,1\}^{\ell(\lambda)}]{F_{\lambda}(x)=\argmax_{y\in \{0,1\}^{\ell(\lambda)}}{\Pr[y=F_\lambda(x)]} \given{x \in \mathcal{K}_{\lambda}}}^3\\
  &\hspace{2cm}\cdot\pr[x\leftarrow \{0,1\}^{\ell(\lambda)}]{\adv'\left(1^{\ell(\lambda)},\argmax_{y\in \{0,1\}^{\ell(\lambda)}}{\Pr[y=F_\lambda(x)]}\right)=x \given{x \in \mathcal{K}_{\lambda}}}\\
  &\hspace{1cm}\geq (1 - \negl[\lambda])^{3} \left(1 - \frac{1}{\ell(\lambda)}\right)^{\ell(\lambda)+1},
\end{align*}
  where the first inequality comes from the law of total probability, and the last inequality from the definition of a PD-OWF and~\Cref{eq:algo}.
  This gives that:
  \begin{align*}
  \pr[x\leftarrow \{0,1\}^{\ell(\lambda)}]{F_{\lambda}(\adv'(1^{\ell(\lambda)},F_{\lambda}(x)))=F_{\lambda}(x)}& \geq (1 - \negl[\lambda])^{3} \left(1 - \frac{1}{\ell(\lambda)}\right)^{\ell(\lambda)+1}\left(1 - \mu(\lambda)\right)\\
  & \geq\left(1-O(\mu(\lambda))\right)\left(1 - \frac{1}{\ell(\lambda)}\right)^{\ell(\lambda)+1}.
  \end{align*}
  Note that this bound is not negligible since
  \begin{align*}
      \left(1 - \frac{1}{\ell(\lambda)}\right)^{\ell(\lambda)+1}
      \geq\frac{1}{10},
  \end{align*}
  whenever \(\lambda \geq 2\), which contradicts~\Cref{eq:sec}.
\end{proof}

\section*{Acknowledgements}
We deeply thank Alex Bredariol Grilo for many precious discussions. GM was supported by the Quantum Delta NL visitor's programme travel grant, which enabled the collaboration.

\ifsubmission 
\else
\fi
\bibliographystyle{alpha}
\renewcommand{\doi}[1]{\url{#1}}
\bibliography{./cryptobib/abbrev3,./cryptobib/crypto,references}
\end{document}